\newcounter{thMM}
\newcounter{leMM}
\newcounter{deFF}
\newcounter{exMP}
\newcounter{prOP}
\newcounter{coRR}
\newenvironment{theorem}[1][Theorem]{\refstepcounter{thMM}\trivlist
   \item[\hskip19pt{\bf #1~\arabic{thMM}.}]\it\hskip3pt}{\endtrivlist}
\newenvironment{definition}[1][Definition]{\refstepcounter{deFF}\trivlist
   \item[\hskip19pt{\bf #1~\arabic{deFF}.}]\it\hskip3pt}{\endtrivlist}
\newenvironment{corollary}[1][Corollary]{\refstepcounter{coRR}\trivlist
   \item[\hskip19pt{\bf #1~\arabic{coRR}.}]\it\hskip3pt}{\endtrivlist}
\newenvironment{proof}[1][Proof]{\begin{trivlist}
\item[\hskip \labelsep {\bfseries #1}]}{\end{trivlist}}
\newcommand{\qed}{\nobreak \ifvmode \relax \else
      \ifdim\lastskip<1.5em \hskip-\lastskip
      \hskip1.5em plus0em minus0.5em \fi \nobreak
      \vrule height0.75em width0.5em depth0.25em\fi}
\newcommand{\Dleft}{[\hspace{-1.5pt}[}
\newcommand{\Dright}{]\hspace{-1.5pt}]}
\newcommand{\SN}[1]{\Dleft #1 \Dright}
\newcommand{\proofend}{\flushright $\qed$}
\newcommand{\Q}{\mathcal{Q}}
\DeclareMathOperator{\Vect}{Vect}
\renewcommand\@biblabel[1]{#1.}
\begin{document}

\author{Andrew James Bruce \footnote{e-mail: andrewjames.bruce@physics.org}}
\date{\today}
\title{On the relation between exact  QS-manifolds and odd Jacobi manifolds}
\maketitle

\begin{abstract}
In this note  we show that given an exact QS-manifold (a natural generalisation of an exact Poisson manifold) one can associate a family of odd Jacobi structures on the same underlying supermanifold.
\end{abstract}

\begin{small}
\noindent \textbf{Mathematics Subject Classification (2010)}. 17B70, 53D17, 58A50, 83C47.\\
\noindent \textbf{Keywords}. Supermanifolds, Jacobi structures, Q-manifolds, Schouten manifolds, QS-manifolds.
\end{small}
\section{Introduction}

Poisson manifolds and their ``superisations" are some of the most fundamental and widespread geometric constructions found through all of mathematical physics. Specifically, Poisson brackets on manifolds and supermanifolds are central to the Hamiltonian description of mechanics, as well as the standard methods of quantisation.  Grassmann odd versions of the  Poisson bracket, known as Schouten brackets in the mathematical literature and antibrackets in the physics literature, have found an important application in gauge theory via the Batalin--Vilkovisky formalism.\\

The algebra of smooth functions on a QS-manifold (in the sense of Voronov \cite{Voronov:2001qf}) comes equipped with a Schouten algebra (a.k.a odd Poisson algebra) together with a homological vector field  that satisfies the Leibniz  rule over the Schouten bracket. Note that the Schouten bracket itself satisfies a Leibniz rule over the supercommutative product of functions. Graded QS-manifolds have been put to good use in describing Lie bialgebroids \cite{Voronov:2001qf}.\\

Odd Jacobi manifolds (see \cite{Bruce2011} for definitions and basic properties) are very similar to QS-manifolds, however the associated odd Jacobi brackets satisfy a \emph{weakened} or \emph{modified} version of the Leibniz rule. One still has a homological vector field satisfying the Leibniz rule over the brackets.\\

Due to the obvious similarities between QS-manifolds and odd Jacobi manifolds is natural to wonder \emph{what direct relations exist  between the two classes of supermanifold}?  \\

 In this short note we show that given an \emph{exact} QS structure on a supermanifold $M$, there exists a pencil of odd Jacobi structures on the same underlying  supermanifold $M$. \emph{Exact} means that the Schouten structure $\widehat{S}\in C^{\infty}(T^{*}M)$ is itself a trivial element in the \emph{Schouten cohomology}. Also, the homological vector field  $\widehat{Q} \in \Vect(M)$ is itself  a trivial element in the  \emph{Q-cohomology} as generated by the Lie derivative along itself $L_{\widehat{Q}}$ (acting on vector fields, though this can be extended to the tensor algebra \cite{Lyakhovich2010}). Such conditions require the existence of a particular even vector field, which will refer to  as the  \emph{homothety vector field} in direct analogy with exact Poisson manifolds \cite{Lichnerowicz1977}. \\

 We take as our inspiration the work of Petalidou \cite{Petalidou2002} who discovered a similar relation between exact Poisson structures and (even) Jacobi structures on the same manifold. \\

\newpage

\noindent \textbf{Preliminaries} \\
All vector spaces and algebras will be $\mathds{Z}_{2}$-graded.   We will generally  omit the prefix \emph{super}. By \emph{manifold} we will mean a \emph{smooth real supermanifold}. We denote the Grassmann parity of an object by \emph{tilde}: $\widetilde{A} \in \mathds{Z}_{2}$. By \emph{even} or \emph{odd} we will be referring explicitly to the Grassmann parity.\\

 A \emph{Poisson} $(\varepsilon = 0)$  or \emph{Schouten} $(\varepsilon = 1)$ \emph{algebra} is understood as a vector space $A$ with a bilinear associative multiplication and a bilinear operation $\{\bullet , \bullet\}: A \otimes A \rightarrow A$ such that:
\begin{list}{}
\item \textbf{Grading} $\widetilde{\{a,b \}_{\varepsilon}} = \widetilde{a} + \widetilde{b} + \varepsilon$
\item \textbf{Skewsymmetry} $\{a,b\}_{\varepsilon} = -(-1)^{(\tilde{a}+ \varepsilon)(\tilde{b}+ \varepsilon)} \{b,a \}_{\varepsilon}$
\item \textbf{Jacobi Identity} $\displaystyle\sum\limits_{\textnormal{cyclic}(a,b,c)} (-1)^{(\tilde{a}+ \varepsilon)(\tilde{c}+ \varepsilon)}\{a,\{b,c\}_{\varepsilon}  \}_{\varepsilon}= 0$
\item \textbf{Leibniz Rule} $\{a,bc \}_{\varepsilon} = \{a,b \}_{\varepsilon}c + (-1)^{(\tilde{a} + \varepsilon)\tilde{b}} b \{a,c \}_{\varepsilon}$
\end{list} \vspace{10pt}
for all homogenous elements $a,b,c \in A$.\\

If the Leibniz rule does not hold identically, but is modified as
\begin{equation}
\{a,bc \}_{\varepsilon} = \{a,b \}_{\varepsilon}c + (-1)^{(\tilde{a} + \varepsilon)\tilde{b}} b \{a,c \}_{\varepsilon} - \{a ,\mathds{1}  \} bc,
\end{equation}

then we have even ($\epsilon = 0)$ or odd ($\epsilon = 1)$ \emph{Jacobi algebras}.\\

A manifold $M$ such that $C^{\infty}(M)$ is a Poisson/Schouten algebra is known as a \emph{Poisson/Schouten manifold}. In particular the cotangent of a manifold comes equipped with a canonical Poisson structure.\\

Let us employ   natural local coordinates $(x^{A}, p_{A})$ on $T^{*}M$, with $\widetilde{x}^{A} = \widetilde{A}$ and $\widetilde{p}_{A} = \widetilde{A}$. Local diffeomorphisms on $M$ induce vector  bundle automorphism on $T^{*}M$ of the form
\begin{equation}
\overline{x}^{A} = \overline{x}^{A}(x), \hspace{30pt} \overline{p_{A}}  = \left(\frac{\partial x^{B}}{\partial \overline{x}^{A}}\right)p_{B}.
\end{equation}

We will in effect use the local description as a \emph{natural vector bundle} to define the cotangent bundle of a supermanifold.  The canonical Poisson bracket on the cotangent is given by

\begin{equation}
\{ F,G \} = (-1)^{\widetilde{A} \widetilde{F} + \widetilde{A}} \frac{\partial F}{\partial p_{A}}\frac{\partial G}{\partial x^{A}} - (-1)^{\widetilde{A}\widetilde{F}}\frac{\partial  F}{\partial x^{A}} \frac{\partial G}{\partial p_{A}}.
\end{equation}\\

Given a vector field $X \in \Vect(M)$ the \emph{symbol} is defined by the replacement $\frac{\partial}{\partial x^{A}} \rightarrow p_{A}$. Thus, $X = X^{A}p_{A} \rightarrow \mathcal{X} = X^{A}p_{A} \in C^{\infty}(T^{*}M)$. Note that the Lie bracket between vector fields becomes the Poisson bracket between the respective symbols. \\

A manifold equipped with an odd vector field $Q \in \Vect(M)$, such that the non-trivial condition $Q^{2}= \frac{1}{2}[Q,Q]=0$ holds, is known as a \emph{Q-manifold} and the vector field $Q$ is known as a \emph{homological vector field} for obvious reasons. \\

\begin{definition}
A \textbf{QS structure} $(\widehat{S}, \widehat{Q})$ on a manifold $M$ consists of
\begin{itemize}
\item an odd function $\widehat{S} \in C^{\infty}(T^{*}M)$, of degree two in fibre coordinates,
\item an odd vector field $\widehat{Q} \in \Vect(M)$,
\end{itemize}
such that the following conditions hold:
\begin{enumerate}
\item the homological condition $\widehat{Q}^{2} = \frac{1}{2} [\widehat{Q},\widehat{Q}]=0$,
\item the invariance condition  $L_{\widehat{Q}}\widehat{S} = 0$,
\item the Schouten  condition $\{\widehat{S},\widehat{S} \}= 0$,
\end{enumerate}
The brackets $\{ \bullet, \bullet \}$ are the canonical Poisson brackets on the cotangent bundle of the manifold.
\end{definition}

The function $\widehat{S} \in C^{\infty}(T^{*}(M))$ is known as the Schouten structure and $\widehat{Q}\in \Vect(M)$ is known as the compatible homological vector field. In short, a QS structure on the manifold provides the algebra of smooth functions $C^{\infty}(M)$ over the manifold with a Schouten bracket,

\begin{equation}
\SN{f,g}_{\widehat{S}} := (-1)^{\widetilde{f}+1} \{ \{ \widehat{S}, f \},g \},
\end{equation}

with $f,g \in C^{\infty}(M)$,  such that the homological vector field  $Q \in \Vect(M)$ satisfies the derivation rule

\begin{equation}
\widehat{Q}\left(\SN{f,g}_{\widehat{S}}\right) = \SN{\widehat{Q}(f), g}_{\widehat{S}} + (-1)^{\widetilde{f}+1} \SN{f, \widehat{Q}(g)}_{\widehat{S}}.
\end{equation}

We will call a manifold with a QS structure a QS-manifold.\\

By suitably relaxing the Schouten condition one arrives at the recently explored odd Jacobi manifolds.

\begin{definition}
An \textbf{odd Jacobi structure} $(S,Q)$ on a manifold $M$  consists of
\begin{itemize}
\item an odd function $S \in C^{\infty}(T^{*}M)$, of degree two in fibre coordinates,
\item an odd vector field $Q \in \Vect(M)$,
\end{itemize}
such that the following conditions hold:
\begin{enumerate}
\item the homological condition $Q^{2} = \frac{1}{2} [Q,Q]=0$,
\item the invariance condition  $L_{Q}S = 0$,
\item the compatibility condition $\{S,S \}= - 2 \Q S $,
\end{enumerate}
 Here $\Q \in C^{\infty}(T^{*}M)$ is the principle symbol or ``Hamiltonian"  of the vector field $Q$.  In local coordinates we have $Q =Q^{A} \frac{\partial}{\partial x^{A}} \rightarrow \mathcal{Q}= Q^{A}p_{A}$. The brackets $\{ \bullet, \bullet \}$ are the canonical Poisson brackets on the cotangent bundle of the manifold.
\end{definition}

The function $S \in C^{\infty}(T^{*}M)$ is known as the almost Schouten structure. A manifold that comes equipped with an odd Jacobi structure will be known as an odd Jacobi manifold. The algebra of smooth functions $C^{\infty}(M)$ over an odd Jacobi manifold is an odd Jacobi algebra where the brackets are provided by the following construction:

\begin{equation}
\SN{f,g}_{J} =  (-1)^{\widetilde{f}+1} \{ \{ S,f \},g    \} - (-1)^{\widetilde{f}+1} \{ \Q, fg \},
\end{equation}
with $f,g \in C^{\infty}(M)$. Again, the homological vector field satisfies a derivation rule over the brackets

\begin{equation}
Q\left(\SN{f,g}_{J}\right) = \SN{Q(f), g}_{J} + (-1)^{\widetilde{f}+1} \SN{f, Q(g)}_{J}.
\end{equation}

A Schouten manifold is simultaneously a QS-manifold and an odd Jacobi manifold with the homological vector field set to identically zero.\\

Note that the  homological and invariance condition for both QS and odd Jacobi manifolds   can be written as
\renewcommand{\labelenumi}{$\arabic{enumi}^{\prime}$.}
\begin{enumerate}
\item $\{\widehat{\Q}, \widehat{\Q} \} =0$ \hspace{15pt} and  \hspace{15pt} $\{\Q,\Q  \}=0$,
\item $\{\widehat{\Q}, \widehat{S}  \}=0$ \hspace{15pt} and  \hspace{15pt} $\{\Q, S  \}=0$.
\end{enumerate}
\renewcommand{\labelenumi}{\arabic{enumi}.}
In effect we \emph{define} the  Lie derivative along a vector field  on $M$ acting on $C^{\infty}(T^{*}M)$ as the Hamiltonian vector field associated with the  respective symbol. \\

\section{Exact QS-manifolds}

\begin{definition}
An \textbf{exact QS-manifold} is the quadruple $(M, \widehat{Q}, \widehat{S}, E)$, where $(M, \widehat{Q}, \widehat{S})$ is a QS-manifold and $E \in \Vect(M)$ is an even vector field, referred as the \textbf{homothety vector field} that  satisfies
\begin{equation}
L_{E} \widehat{S} = - \widehat{S} \hspace{15pt} \textnormal{and} \hspace{15pt} L_{E}\widehat{Q} = -\widehat{Q}.
\end{equation}
\end{definition}

The existence of the homothety vector field on a QS-manifold means that  both $\widehat{\Q}$ and $\widehat{S}$ are exact

\begin{eqnarray}
 \nonumber \{ \mathcal{E}, \widehat{S} \} = -\widehat{S} &\longrightarrow& \widehat{S} = \{\widehat{S}, \mathcal{E}  \},\\
\nonumber \{ \mathcal{E}, \widehat{\Q} \} = -\widehat{\Q} &\longrightarrow& \widehat{\Q} = \{ \widehat{\Q}, \mathcal{E} \},
\end{eqnarray}

with respect to the  operators on $C^{\infty}(T^{*}M)$ they generate. Here $\mathcal{E} \in C^{\infty}(T^{*}M)$ is the symbol of the homothety vector field. In other words, the Schouten structure is itself a trivial element in the \emph{Schouten cohomology}  as generated by $\delta_{\widehat{S}} := \{\widehat{S}, \bullet \}$. The homological structure is similarly  a trivial element in the  cohomology of the operator $L_{\widehat{Q}}$. Poisson cohomology goes back to Lichnerowicz \cite{Lichnerowicz1977}, who also introduced the notion of (even) Jacobi manifolds.  For a discussion of the cohomology of a Q-manifold see \cite{Lyakhovich2010}.\\

Let us proceed to the theorem relating exact QS structures  to odd Jacobi structures on the same underlying manifold.

\begin{theorem}\label{theorem}
Let $(M, \widehat{Q}, \widehat{S}, E)$ be an exact QS-manifold. Then the  pair $(S = \widehat{S} + \mathcal{E}\widehat{\Q},  Q = \widehat{Q})$,
provides an odd Jacobi structure on the manifold $M$.
\end{theorem}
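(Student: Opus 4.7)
The plan is to verify the three axioms of Definition 4 for the pair $(S, Q) = (\widehat{S} + \mathcal{E}\widehat{\Q}, \widehat{Q})$ directly from the QS structure and the exactness relations. The homological condition $Q^{2} = 0$ is immediate, since $Q = \widehat{Q}$ and this is part of the QS hypothesis. The two remaining conditions, $\{\mathcal{Q}, S\} = 0$ and $\{S, S\} = -2\mathcal{Q}S$, are the content of the theorem; throughout, skew-symmetry of the canonical (even) bracket on $T^{*}M$ converts the exactness relations into $\{\widehat{S}, \mathcal{E}\} = \widehat{S}$ and $\{\widehat{\Q}, \mathcal{E}\} = \widehat{\Q}$, which I will use repeatedly.

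For the invariance condition I would apply the graded Leibniz rule to expand
\begin{equation*}
\{\widehat{\Q},\, \widehat{S} + \mathcal{E}\widehat{\Q}\} \;=\; \{\widehat{\Q}, \widehat{S}\} + \{\widehat{\Q}, \mathcal{E}\}\widehat{\Q} + \mathcal{E}\{\widehat{\Q}, \widehat{\Q}\}.
\end{equation*}
The first term vanishes by the QS invariance, the third by the QS homological condition, and the surviving middle term equals $\widehat{\Q}\cdot\widehat{\Q}$, which is zero because $\widehat{\Q}$ is an odd element of the supercommutative algebra $C^{\infty}(T^{*}M)$ and hence squares to zero.

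For the compatibility condition I would expand $\{S, S\}$ bilinearly, using skew-symmetry to combine the two cross-terms, as
\begin{equation*}
\{S, S\} \;=\; \{\widehat{S}, \widehat{S}\} + 2\{\widehat{S}, \mathcal{E}\widehat{\Q}\} + \{\mathcal{E}\widehat{\Q}, \mathcal{E}\widehat{\Q}\}.
\end{equation*}
The first term vanishes by the Schouten condition. The middle term, via Leibniz, reduces to $\{\widehat{S}, \mathcal{E}\}\widehat{\Q} + \mathcal{E}\{\widehat{S}, \widehat{\Q}\} = \widehat{S}\widehat{\Q}$ after invoking the exactness of $\widehat{S}$ and the QS invariance. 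The final term, by a further Leibniz expansion, collapses to a combination of $\{\mathcal{E}, \mathcal{E}\} = 0$ and $\widehat{\Q}^{2} = 0$, so it vanishes. Hence $\{S, S\} = 2\widehat{S}\widehat{\Q}$. To match the right-hand side, I would write $-2\mathcal{Q}S = -2\widehat{\Q}\widehat{S} - 2\widehat{\Q}\mathcal{E}\widehat{\Q}$ and apply supercommutativity to get $\widehat{\Q}\widehat{S} = -\widehat{S}\widehat{\Q}$ and $\widehat{\Q}\mathcal{E}\widehat{\Q} = \mathcal{E}\widehat{\Q}^{2} = 0$, so the right-hand side also equals $2\widehat{S}\widehat{\Q}$.

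The main obstacle I anticipate is simply bookkeeping of the graded signs: the bracket on $T^{*}M$ is even, but the relevant functions $\widehat{S}$, $\widehat{\Q}$, and $\mathcal{E}\widehat{\Q}$ are all odd, so the factors arising from Leibniz and skew-symmetry have to be tracked with care. The conceptual content is otherwise light: supercommutativity annihilates every self-product involving $\widehat{\Q}$, and the two exactness relations supply precisely the single surviving cross-term $\widehat{S}\widehat{\Q}$ needed to balance $-2\mathcal{Q}S$ against $\{S, S\}$.
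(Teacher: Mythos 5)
Your proposal is correct and follows essentially the same route as the paper: a direct expansion of $\{\mathcal{Q},S\}$ and $\{S,S\}$ via the graded Leibniz rule, killing terms with the QS conditions, the exactness relations $\{\widehat{S},\mathcal{E}\}=\widehat{S}$, $\{\widehat{\Q},\mathcal{E}\}=\widehat{\Q}$, and the supercommutative identity $\widehat{\Q}^{2}=0$. The only (cosmetic) difference is that the paper retains the identically vanishing term $-2\widehat{\Q}\mathcal{E}\widehat{\Q}$ so that $\{S,S\}$ reads $-2\widehat{\Q}(\widehat{S}+\mathcal{E}\widehat{\Q})=-2\mathcal{Q}S$ on the nose, whereas you reduce both sides separately to $2\widehat{S}\widehat{\Q}$ and compare.
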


\begin{proof}
The proof requires one to examine the the invariance and compatibility conditions for odd Jacobi structures. The homological condition is given.
\begin{itemize}
\item Writing out the self-Poisson bracket of $S$ one obtains
 \begin{eqnarray}
\nonumber \{S, S \} &=& \{\widehat{S}, \widehat{S} \} + 2 \{ \widehat{S}, \mathcal{E} \}\widehat{\Q} + 2 \mathcal{E}\{ \widehat{S}, \widehat{\Q} \}\\
\nonumber &-& \{ \mathcal{E}, \mathcal{E} \}\widehat{\Q}^{2} - 2 \mathcal{E}\{ \mathcal{E}, \widehat{\Q} \}\widehat{\Q} + \mathcal{E}^{2}\{\widehat{\Q}, \widehat{\Q} \}\\
\nonumber &=& 2 \widehat{\Q}\left(\{ \mathcal{E} , \widehat{S}\}+ \mathcal{E} \{\mathcal{E}, \widehat{\Q} \}\right)\\
\nonumber &=& - 2 \widehat{\Q}\left ( \widehat{S}  + \mathcal{E} \widehat{\Q}\right).
\end{eqnarray}
\item Writing out the Poisson bracket between $\mathcal{Q}$ and $S$ one obtains
\begin{eqnarray}
\nonumber \{\mathcal{Q} , S \} &=& \{ \widehat{\Q} , \widehat{S}\} + \{\widehat{\Q}, \mathcal{E}  \}\widehat{\Q} + \mathcal{E}\{ \widehat{\Q}, \widehat{\Q}\}\\
\nonumber &=&  \widehat{\Q}^{2} =0.
\end{eqnarray}
\end{itemize}
Thus $S$ and $Q$ define an odd Jacobi structure on the manifold $M$.
\proofend
\end{proof}

Via a mild generalisation of the above proof we arrive at the following corollary:

\begin{corollary}
  Associated with any exact QS-structure on $M$ is a pencil of odd Jacobi structures also on $M$ given by  $(S = a \widehat{S} + b \mathcal{E}\widehat{\Q}, Q =  b\widehat{Q})$ where $a,b$ are  even parameters (or just real  numbers).
\end{corollary}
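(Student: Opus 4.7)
The plan is to mimic the calculation in the proof of Theorem~\ref{theorem}, verifying the three defining axioms of an odd Jacobi structure for the pair $(S, Q) = (a\widehat{S} + b\mathcal{E}\widehat{\Q},\; b\widehat{Q})$. The essential observation that makes this a mild generalisation is that, since $a$ and $b$ are even scalars, they commute with every homogeneous function on $T^{*}M$ and pass freely through the canonical Poisson bracket without generating signs. The homological condition is then immediate: $Q^{2} = b^{2}\widehat{Q}^{2} = 0$.

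For the invariance condition $\{\mathcal{Q}, S\} = 0$, I would bilinearly expand $\{\mathcal{Q},S\} = ab\{\widehat{\Q},\widehat{S}\} + b^{2}\{\widehat{\Q},\mathcal{E}\widehat{\Q}\}$, apply the graded Leibniz rule to the second term, and kill the pieces using the QS invariance $\{\widehat{\Q},\widehat{S}\} = 0$, the exactness relation $\{\widehat{\Q}, \mathcal{E}\} = \widehat{\Q}$, and the homological condition $\{\widehat{\Q},\widehat{\Q}\} = 0$, exactly as in the second bullet of the theorem's proof. For the compatibility condition $\{S,S\} = -2\mathcal{Q}S$, I would bilinearly expand to obtain
\[
\{S,S\} = a^{2}\{\widehat{S},\widehat{S}\} + 2ab\{\widehat{S},\mathcal{E}\widehat{\Q}\} + b^{2}\{\mathcal{E}\widehat{\Q},\mathcal{E}\widehat{\Q}\},
\]
and recognise the three summands as precisely those treated in Theorem~\ref{theorem} at $a = b = 1$, now weighted by $a^{2}, ab, b^{2}$. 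The Schouten condition annihilates the first; iterated applications of graded Leibniz combined with $\{\widehat{\Q},\widehat{\Q}\} = 0$, $\{\mathcal{E},\mathcal{E}\} = 0$, $\{\widehat{S},\widehat{\Q}\} = 0$, and the exactness identities $\{\mathcal{E},\widehat{S}\} = -\widehat{S}$, $\{\mathcal{E},\widehat{\Q}\} = -\widehat{\Q}$ collapse the remainder to $-2b\widehat{\Q}\bigl(a\widehat{S} + b\mathcal{E}\widehat{\Q}\bigr) = -2\mathcal{Q}S$, as required.

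The only real work is the bookkeeping of Grassmann signs when shuffling the even parameters $a, b$ and the even function $\mathcal{E}$ past the odd symbols $\widehat{S}, \widehat{\Q}$ via graded skew-symmetry and Leibniz. No new conceptual input beyond Theorem~\ref{theorem} is needed; the parametrisation simply scales each summand in the theorem's expansion by the appropriate monomial in $a$ and $b$, and the structural identities do the rest. In particular, setting $a = b = 1$ recovers Theorem~\ref{theorem} as a special case, so the corollary really is a one-parameter extension of the existing argument.
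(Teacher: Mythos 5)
Your proposal is correct and is precisely the ``mild generalisation'' of the proof of Theorem~1 that the paper invokes without writing out: each summand of $\{S,S\}$ and $\{\mathcal{Q},S\}$ is the corresponding term of the theorem's expansion weighted by $a^{2}$, $ab$, or $b^{2}$, and the same structural identities kill or collapse them to $-2\mathcal{Q}S$ and $0$ respectively. No discrepancy with the paper's intended argument.
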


In short, every exact QS-manifold is also an odd Jacobi manifold. Setting $a=b=1$ produces a ``canonical" odd Jacobi structure on $M$. Setting $a=1$ and $b=0$ confirms the notion that a Schouten manifold can be thought of as an odd Jacobi manifold with the trivial homological vector field. Setting $a=0$ and $b=1$ confirms the notion that a Q-manifolds can be thought of as an odd Jacobi manifold with the trivial Schouten structure. In a loose sense, intermediate values of $a$ and $b$ interpolate between the extremes of Schouten manifolds and Q-manifolds understood as examples of odd Jacobi manifolds.

\newpage

\section{Some remarks}

It is well known that given an exact (or homogeneous) Poisson bi-vector  $P \in \mathfrak{X}^{2}(M_{0})$, where   $M_{0}$ is a pure even (classical) manifold,  together with  a $1$-codimensional closed submanifold $N \subset M_{0}$ such that the homothety vector field $E \in \Vect(M_{0})$ is transversal to $N$, then $P$ can be reduced to an even Jacobi structure on $N$. For details see \cite{Dazord1991}, as well as the generalisation of \cite{Grabowski2004}.\\

It is expected that a similar theorem relating (exact) Schouten manifolds and odd Jacobi manifolds differing in dimension by one exists. Details, including the subtleties of  working on supermanifolds requires proper  exploration.\\

However, it is clear that a generalisation of ``Poissonisation" exists; the  ``Schoutenisation" of an odd Jacobi manifold. That is given an odd Jacobi manifold $(M, S, Q)$ one can directly construct a Schouten structure on the manifold $M \otimes \mathds{R}$ viz

\begin{equation}\nonumber
\widehat{S} := e^{-t} \left( S - \mathcal{Q}p \right) \in C^{\infty}(T^{*}(M \otimes \mathds{R})),
\end{equation}

where $(t,p)$ are the natural coordinates on $T^{*}\mathds{R}$. Proving that the above is a Schouten structure follows in a straightforward and direct manner.\\

The question of relating odd Jacobi structures and odd contact structures also requires further exploration. The specific example of constructing an odd Jacobi structure on $\mathcal{M} = \Pi T^{*}M \otimes \mathds{R}^{0|1}$ starting from the canonical odd contact structure on $\mathcal{M}$ is given in \cite{Bruce2011}.

\section*{Acknowledgments}
The author would like to thank Professor Janusz Grabowski for his comments on an earlier draft of this work and for pointing out the paper Grabowski et.al. \cite{Grabowski2004}.


\vfill
\begin{center}
Andrew James Bruce\\
email: \texttt{andrewjames.bruce@physics.org}
\end{center}

\end{document}